\DeclareMathOperator*{\argmax}{arg\,max}
\newtheorem{definition}{Definition}
\newtheorem{theorem}{Theorem}
\setlist[description]{leftmargin=\parindent,labelindent=\parindent}
\newcommand{\NEW}[1]{#1}
\icmltitlerunning{Learning to Play Against Any Mixture of Opponents}
\begin{document}

\twocolumn[
\icmltitle{Learning to Play Against Any Mixture of Opponents}



\icmlsetsymbol{equal}{*}

\begin{icmlauthorlist}
\icmlauthor{Max Olan Smith}{umich}
\icmlauthor{Thomas Anthony}{deepmind}
\icmlauthor{Yongzhao Wang}{umich}
\icmlauthor{Michael P. Wellman}{umich}
\end{icmlauthorlist}

\icmlaffiliation{umich}{University of Michigan}
\icmlaffiliation{deepmind}{Deepmind}

\icmlcorrespondingauthor{Max Olan Smith}{mxsmith@umich.edu}

\icmlkeywords{Reinforcement Learning, Multiagent Learning, Transfer Learning}

\vskip 0.3in
]



\printAffiliationsAndNotice{} 

\begin{abstract}
Intuitively, experience playing against one mixture of opponents in a given domain should be relevant for a different mixture in the same domain. 
We propose a transfer learning method, \textit{Q-Mixing}, that starts by learning Q-values against each pure-strategy opponent. Then a Q-value for \emph{any} distribution of opponent strategies is approximated by appropriately averaging the separately learned Q-values.
From these components, we construct policies against all opponent mixtures without any further training. 
We empirically validate Q-Mixing in two environments: a simple grid-world soccer environment, and \NEW{a social dilemma game}. 
We find that Q-Mixing is able to successfully transfer knowledge across any mixture of opponents. 
We next consider the use of observations during play to update the believed distribution of opponents.
We introduce an opponent classifier---trained in parallel to Q-learning, reusing  data---and use the classifier results to refine the mixing of Q-values. 
We find that Q-Mixing augmented with the opponent policy classifier performs \NEW{better, with higher variance,} than training directly against a mixed-strategy opponent.
\end{abstract}

\section{Introduction}
\label{sec:introduction}

Reinforcement learning (RL) agents commonly interact in environments with other agents, whose behavior may be uncertain.
For any particular probabilistic belief over the behavior of another agent (henceforth, \emph{opponent}), we can learn to play with respect to that opponent distribution (henceforth, \emph{mixture}), for example by training in simulation against opponents sampled from the mixture.
If the mixture changes, ideally we would not have to train from scratch, but rather could \emph{transfer} what we have learned to construct a policy to play against the new mixture.

\begin{figure}[!ht]
    \centering
    \includegraphics[width=0.45\textwidth]{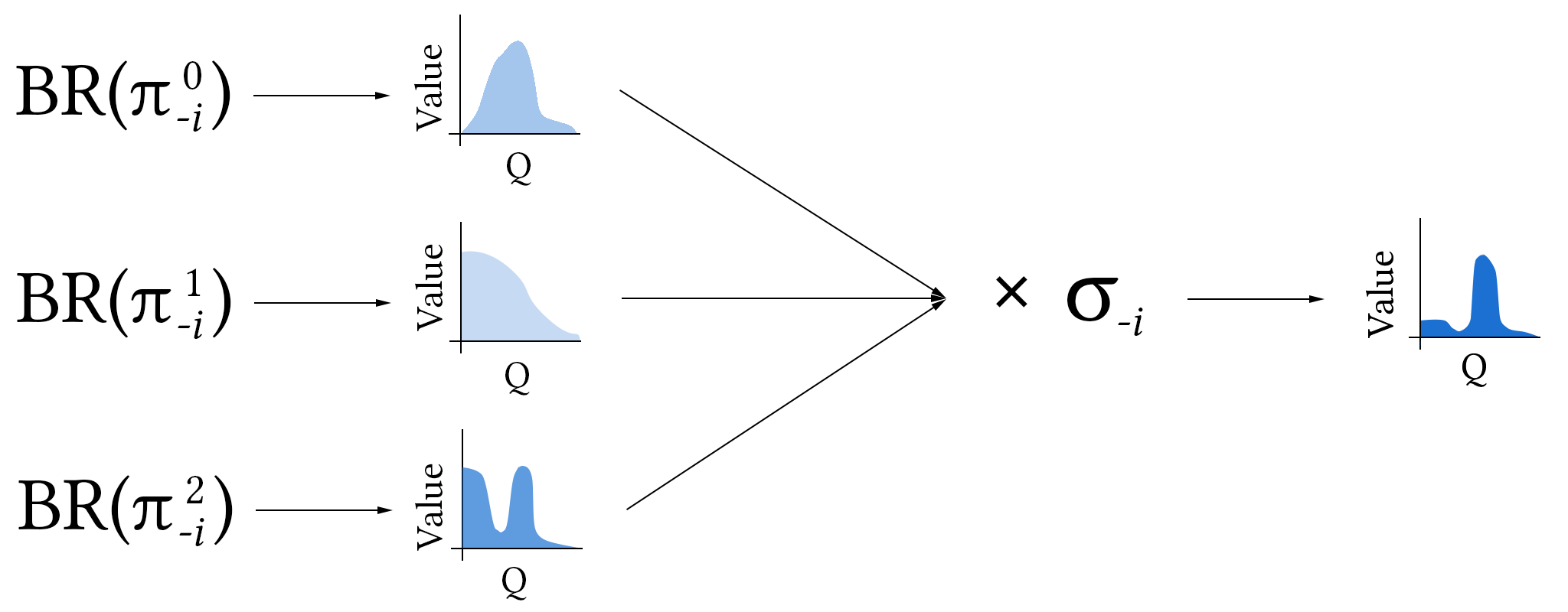}
    \caption{
        Q-Mixing concept. 
        BR to each of the pure strategies~$\pi_{-i}$ are learned separately. 
        The Q-value for a given opponent mixed strategy~$\sigma_{-i}$ is then derived by combining these components.
    }
    \label{fig:teaser}
\end{figure}

Traditional RL algorithms include no mechanism to explicitly prepare for variability in opponent mixtures. 
Instead, current solutions either learn a new behavior or update a previously learned behavior.
In lieu of that, researchers designed methods for learning a single behavior successfully across a set of strategies~\cite{wang03pareto}, or quickly adapting in response to new strategies~\cite{jaderberg19ctf}. 
In this work, we explicitly tackle the unique problem of responding to new opponent mixtures without requiring further simulations for learning.

We propose a new algorithm, \emph{Q-Mixing}, that effectively transfers learning across opponent mixtures.
The algorithm is designed within a population-based learning regime \cite{jaderberg19ctf, lanctot17psro}, where training is conducted against a known distribution of opponent policies.
Q-Mixing initially learns value-based best responses (BR), represented as Q-functions, with respect to each of the opponent's pure strategies.
It then transfers this knowledge against any given opponent mixture by weighting the Q-functions according to the probability that the opponent plays each of its pure strategies.
The idea is illustrated in Figure~\ref{fig:teaser}.
This calculation is an approximation of the Q-values against the mixed-strategy opponent, with error due to misrepresenting the future belief of the opponent by the current belief.
The result is an approximate BR policy against any mixture, constructed without additional training.

\NEW{%
This situation is common in game-theoretic approaches to multiagent RL such as Policy-Space Response Oracles (PSRO)~\cite{lanctot17psro} where a population of agents is trained and reused.
\citet{smith21} showed that combining PSRO and Q-Mixing results in large reductions of PSRO's cumulative training time.}

We experimentally validate our algorithm on: (1)~a simple grid-world soccer game, and (2)~a sequential social dilemma game. 
Our experiments show that Q-Mixing is effective in transferring learned responses across opponent mixtures. 

We also address two potential issues with combining Q-functions according to a given mixture. 
The first is that the agent receives information \emph{during play} that provides evidence about the opponent's strategy. 
We address this by introducing an opponent classifier to predict which opponent we are playing for a particular state and reweighting the Q-values to focus on the likely opponents.
The second issue is that the complexity of the policy produced by Q-Mixing grows linearly with the support of the opponent's mixture. 
This can make simulation and decision making computationally expensive. 
\NEW{We propose and show that policy distillation~\cite{rusu16distill} can compress a Q-Mixing policy while maintaining performance.}

\smallskip\noindent\textbf{Key Contributions:}%
(1)~Theoretically relate (in idealized setting) the Q-value for an opponent mixture to Q-values for mixture components.
(2)~A new transfer learning algorithm, \emph{Q-Mixing}, that uses this relationship to construct approximate policies against any given opponent mixture, without additional training.
(3)~Augmenting this algorithm with runtime opponent classification, to account for observations that can inform predictions of opponent policy during play.
(4)~Demonstrate that policy distillation can reduce Q-Mixing's complexity (in both memory and computation).

\section{Preliminaries}
\label{sec:preliminaries}
At any time $t\in\mathcal{T}$, an agent receives the environment's state $s^t\in\mathcal{S}$, or an observation $o^t\in\mathcal{O}$, a partial state.
(Even if an environment's state is fully observable, the inclusion of other agents with uncertain behavior makes the overall system partially observable.)
\NEW{From said observation, the agent takes an action} $a^t\in\mathcal{A}$ \NEW{receiving} a reward $r^t\in\mathbb{R}$.
The agent's \emph{policy} describes its behavior given each observation $\pi:\mathcal{O}\to\Delta(\mathcal{A})$\@. 
Actions are received by the environment, and a next observation is determined following the environment's transition dynamics $p:\mathcal{O}\times\mathcal{A}\to\mathcal{O}$\@. 

The agent's goal is to maximize its reward over time; called the \emph{return}: $G^t=\sum_{l}^{\infty}\gamma^{l}r^{t+l}$, where $\gamma$ is the discount factor weighting the importance of immediate rewards. 
Return is used to define the values of being in of a given observation: 
\begin{displaymath}
V(o^t) = \mathbb{E}_{\pi}\left[\sum_{l=0}^{\infty} \gamma^{l}r(o^{t+l}, a^{t+l}) \right],
\end{displaymath}
and taking an action given an observation: 
\begin{displaymath}
Q(o^t, a^t) = r(o^t, a^t) + \gamma \mathbb{E}_{o^{t+1}\in\mathcal{O}}\left[ V(o^{t+1}) \right].
\end{displaymath}

For multiagent settings, we index the agents and distinguish the agent-specific components with subscripts. 
Agent~$i$'s policy is $\pi_i:\mathcal{O}_i\to\Delta(\mathcal{A}_i)$, and the opponent's policy%
\footnote{Our methods are defined here for environments with two agents. 
Extension to greater numbers while maintaining computational tractability is a topic for future work.}
is the negated index, $\pi_{-i}:\mathcal{O}_{-i}\to\Delta(\mathcal{A}_{-i})$.
Boldface elements are joint across the agents (e.g., joint-action $\bm{a}$).

Agent~$i$ has a strategy set $\Pi_i$ comprising the possible policies it can employ.
Agent~$i$ may choose a single policy from $\Pi_i$ to play as a \emph{pure strategy}, or randomize with a \emph{mixed strategy} $\sigma_i\in\Delta(\Pi_i)$. 
Note that the pure strategies $\pi_i$ may themselves choose actions stochastically.
For a mixed strategy, we denote the probability the agent plays a particular policy~$\pi_{i}$ as $\sigma_{i}(\pi_{i})$. 
A \emph{best response} ($\text{BR}$) to an opponent's strategy $\sigma_{-i}$ is a policy with maximum return against $\sigma_{-i}$.

Agent~$i$'s prior belief about its opponent is represented by an opponent mixed-strategy, $\sigma_{-i}^0 \equiv \sigma_{-i}$. 
The opponent plays the mixture by sampling a policy according to $\sigma_{-i}$ at the start of the episode.
They are locked into the sampled policy's behavior for the entire duration of the episode.
The agent's updated belief at time~$t$ about the opponent policy faced is denoted $\sigma_{-i}^t$.

We introduce the term \textit{Strategy Response Value} (SRV) to refer to the observation-value against an opponent's strategy.
\begin{definition}[Strategic Response Value]
An agent's $\pi_i$ \emph{strategic response value} is its expected return given an observation, when playing $\pi_i$ against a specified opponent strategy:
\begin{displaymath}
V_{\pi_i}(o^t_i \mid \sigma_{-i}^t)=\mathbb{E}_{\sigma_{-i}^t}\left[ \sum_{\bm{a}}\bm{\pi}(a_i\mid o_i^t) \sum_{o_i',r_i} p(o_i', r_i\mid o_i^t, \bm{a}) \delta \right]
\end{displaymath}
where  $\delta\equiv r_i + \gamma V_{\pi_i}(o_i'\mid\sigma_{-i}^{t+1})$.
Let the \emph{optimal SRV} be 
\begin{displaymath}
V^*_i(o^t_i \mid \sigma_{-i}^t)= \max_{\pi_i}V_{\pi_i}(o^t_i \mid \sigma_{-i}^t).
\end{displaymath}
\label{defn:srv}
\end{definition}

From the SRV, we define the Strategic Response Q-Value (SRQV) for a particular opponent strategy.
\begin{definition}[Strategic Response Q-Value]
\label{defn:srqv}
An agent's $\pi_i$ \emph{strategic response Q-value} is its expected return for an action given an observation, when playing $\pi_i$ against a specified opponent strategy: 
\begin{displaymath}
Q_{\pi_i}(o_i^t, a_i^t \mid \sigma_{-i}^t) = \mathbb{E}_{\sigma_{-i}^t}\left[ r_i^t\right] + \gamma \mathbb{E}_{o^{t+1}_i}\left[ V_{\pi_i}(o^{t+1}_i \mid \sigma_{-i}^{t+1}) \right],
\end{displaymath}
where $r_i^t\equiv r_i(o_i^t,a_i^t,a_{-i}^t)$.
Let the \emph{optimal SRQV} be 
\begin{displaymath}
Q^*_i(o_i^t, a_i^t \mid \sigma_{-i}^t) = \max_{\pi_i}Q_{\pi_i}(o_i^t, a_i^t \mid \sigma_{-i}^t).
\end{displaymath}
\end{definition}

\section{Q-Mixing}
\label{sec:qmixing}

Our goal is to transfer the Q-learning effort across different opponent mixtures.
We consider the scenario where we first learn against each opponent's pure strategy. 
From this, we construct a Q-function for a given distribution of opponents from the policies trained against each opponent's pure strategy. 

\subsection{Single-State Setting}
Let us first consider a simplified setting with a single state.
This is essentially a problem of bandit learning, where our opponent's strategy will set the reward of each arm for an episode. 
Intuitively, our expected reward against a mixture of opponents is proportional to the payoff against each opponent weighted by their respective likelihood. 

As shown in Theorem~\ref{thm:qmix}, weighting the component SRQV by the opponent's distribution supports a BR to that mixture. 
We call this relationship \emph{Q-Mixing-Prior} and define it in Theorem~\ref{thm:qmix-single-state} (proof provided in Section~\ref{sec:q-mixing-proof}).
\begin{theorem}[Single-State Q-Mixing]
\label{thm:qmix-single-state}
    Let $Q^*_i(\cdot \mid \pi_{-i})$, $\pi_{-i}\in\Pi_{-i}$, denote the optimal strategic response Q-value against opponent policy $\pi_{-i}$. 
    Then for any opponent mixture $\sigma_{-i} \in\Delta(\Pi_{-i})$, the optimal strategic response Q-value is given by
\begin{displaymath}
Q^*_i(a_i\mid\sigma_{-i}) = \sum_{\pi_{-i}\in\Pi_{-i}} \sigma_{-i}(\pi_{-i}) Q^*_i(a_i\mid\pi_{-i}).
\end{displaymath}
\end{theorem}

\subsection{Leveraging Information from the Past}
Next, we consider the RL setting where both agents are able to influence an evolving observation distribution.
As a result of the joint effect of agents' actions on observations, the agents have an opportunity to gather information about their opponent during an episode.
Methods in this setting need to (1)~use information from the past to update its belief about the opponent, and (2)~grapple with uncertainty about the future.
\NEW{To bring Q-Mixing into this setting we need to quantify the agent's current belief about their opponent and their future uncertainty.}

\NEW{%
During a run with an opponent's pure strategy drawn from a distribution, the actual observations experienced generally depend on the identity of this pure strategy.
Let $\psi_i: \mathcal{O}_{i}^{0:t}\to \Delta(\Pi_{-i})$ represent the agent's current belief about the opponent's policy using the observations during play as evidence. 
From this prediction, we propose an approximate version of Q-Mixing that accounts for past information.
The approximation works by first predicting the relative likelihood of each opponent given the current observation.
Then it weights the Q-value-based BRs against each opponent by their relative likelihood.}

Figure~\ref{fig:uncertainty} provides an illustration of the benefits and limitations of this new prediction-based Q-Mixing.
At any given timestep $t$ during the episode, the information available to an agent about the opponents may be insufficient to perfectly identify their policy.
The yellow area above a timestep represents the uncertainty reduction from an updated prediction of the opponent $\sigma_{-i}^t$ compared to the baseline prediction of the prior $\sigma_{-i}^0$.
Crucially, this definition of Q-Mixing does not consider updating the opponent distribution from new information in the future (blue area in Figure~\ref{fig:uncertainty}).

\begin{figure}[ht]
    \centering
    \includegraphics[width=0.4\textwidth]{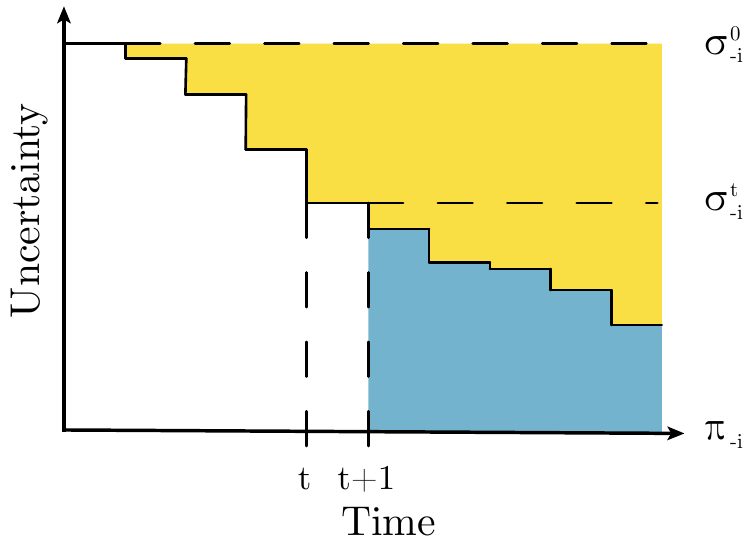}
    \caption{
        Opponent uncertainty over time. 
        The yellow area represents uncertainty reduction as a result of updating belief about the distribution of the opponent.
        The blue area represents approximation error incurred by Q-Mixing.
    }
    \label{fig:uncertainty}
\end{figure}

Let the previously defined $\psi$ be the \emph{opponent policy classifier} (OPC), which predicts the opponent's policy. 
In this work, we consider a simplified version of this function, that operates only on the agent's current observation $\psi_i:\mathcal{O}_i\to\Delta(\Pi_{-i})$.
We then augment Q-Mixing to weight the importance of each BR as follows:
\begin{displaymath}
Q_{\pi_i}(o_i, a_i \mid \sigma_{-i}) = \sum_{\pi_{-i}}\psi_i(\pi_{-i}\mid o_i, \sigma_{-i})Q_{\pi_i}(o_i, a_i \mid \pi_{-i}).
\end{displaymath}
We refer to this quantity as \emph{Q-Mixing}, or \emph{Q-Mixing-X}, where X describes $\psi$ (e.g., \emph{Q-Mixing-Prior} defines $\psi$ as the prior uncertainty of the opponent $\sigma_{-i}^0$). 
By continually updating the opponent distribution during play, the adjusted Q-Mixing result better responds to the actual opponent.

An ancillary benefit of the opponent classifier is that poorly estimated Q-values tend to have their impact minimized. 
For example, if an observation occurs only against the second pure strategy, then the Q-value against the first pure strategy would not be trained well, and thus could distort the policy from Q-Mixing.
These poorly trained cases correspond to unlikely opponents and get reduced weighting in the version of Q-Mixing augmented by the classifier. 

\subsection{Accounting for Future Uncertainty}
\NEW{To account for future uncertainty, Q-Mixing must be able to update its successor observation values given future observations.}
This can be done by expanding the Q-value \NEW{into its components}: expected reward under the current belief in the opponent's policy, and our expected next observation value.
\NEW{By updating the second term to recursively reference a new opponent belief we can account for changing beliefs in the future.}
The extended formulation, \NEW{Q-Mixing-Value-Iteration (QMVI)}, is given by:
\begin{align*}
    \delta &= r_i(o_i^t, a_i^t \mid \pi_{-i}) + \gamma \mathbb{E}_{o_i^{t+1}}\left[ V^*(o_i^{t+1}\mid\sigma_{-i})\right], \\
    Q^*_i&(o_i^t,a_i^t\mid\sigma_{-i}) = \sum_{\pi_{-i}\in\Pi_{-i}} \psi_i(\pi_{-i}\mid o_i^t, \sigma_{-i}) \cdot \delta . \\
\end{align*}

If we assume that we have access to both a dynamics model and the observation distribution dependent on the opponent, then we can directly solve for this quantity through Value Iteration (Algorithm~\ref{alg:vi}). 
These requirements are quite strong, essentially requiring perfect knowledge of the system with regards to all opponents.
The additional step of Value Iteration also carries a computational burden, as it requires iterating over the full observation and action spaces.
Though these costs may render \NEW{QMVI} infeasible in practice, we provide Algorithm~\ref{alg:vi} in Section~\ref{sec:q-mixing-proof} as a way to ensure correctness in Q-values.

\section{Experiments}
\label{sec:experiments}

\subsection{Grid-World Soccer}
We first evaluate Q-Mixing on a simple grid-world soccer environment \cite{littman94markov, greenwald03correlatedq}.
This environment has small state and action spaces, allowing for inexpensive simulation.
With this environment we pose the following questions:
(1)~Can QMVI obtain Q-values for mixed-strategy opponents?
(2)~Can Q-Mixing transfer Q-values across all of the opponent's mixed strategies?

The soccer environment is composed of a soccer field, two players, one ball, and two goals. 
The player's objective is to acquire the ball and score a goal while preventing the opponent from scoring on their own goal. 
The scorer receives $+1$ reward, and the opponent receives $-1$ reward. 
The state of the environment consists of the entire field including the locations of the players and ball.
This is represented as a $5\times4$ matrix with six possible values in each cell, referring to the occupancy of the cell.
The players may move in any of the four cardinal directions or stay in place.
Actions taken by the players are executed in a random order, and if the player possessing the ball moves last then the first player may take possession of the ball by colliding with them. 
A graphical example of the soccer environment can be seen in Figure~\ref{fig:soccer}.

In our experiments, we learn policies for Player~1 using Double DQN \cite{hasselt16ddqn}.
The state space is ravelled into a vector of length 120 that is then fed into a deep neural network.
The network has two fully-connected hidden layers of size 50 with ReLU activation functions and an output layer over the actions with size 5.
Player~2 plays a strategy over five policies, \NEW{each using the same shape neural networks} as Player~1, generated using the double oracle (DO) algorithm \cite{mcmahan03do}. 
These policies are frozen for the duration of the experiments.
Further details of the environment and experiments are in Section~\ref{sec:details:soccer}.

\subsubsection{Empirical Verification of Q-Mixing}
\label{sec:soccer:verify}
We now turn to our first question: whether QMVI can obtain Q-values for mixed-strategy opponents. 
To answer this, we run the QMVI algorithm against a fixed opponent mixed strategy (Algorithm~\ref{alg:vi}).
We construct dynamics models for each opponent by considering the opponent's policy and the multiagent dynamics model as a single entity.
Then we may approximate the relative observation-occupancy distribution by rolling out 30 episodes against each policy and estimating the distribution.

In our experiment, an optimal policy was reached in fourteen iterations.
The resulting policy best-responded to each individual opponent and the mixture.
This empirically validates our first hypothesis.

\subsubsection{Coverage of Opponent Strategy Space}
Our second question is whether Q-Mixing can produce high-quality responses for any opponent mixture.
Our evaluation of this question employs the same five opponent pure strategies as the previous experiment.
We first trained a baseline, \emph{BR(Uniform)} \NEW{or BR$(\sigma_{-i})$}, directly against the uniform mixed-strategy opponent.
The baseline was trained using 300000 simulation steps.
The same hyperparameters were used to train against each of the opponent's pure strategies, with the simulation budget split equally. 
The Q-values trained respectively are used as the components for Q-Mixing, and an OPC is also trained from their replay buffers.
The OPC has the same neural network architecture as the policies, but modifies the last layer to predict over the size of the opponent's strategy set.

We evaluate each method against all opponent mixtures truncated to the tenths place {(e.g., [0.1, 0.6, 0.1, 0.2, 0.0])}, resulting in 860 strategy distributions.
This is meant to serve as a representative coverage of the entire opponent's mixed-strategy space. 
For each one of these mixed-strategies, we simulate the performance of each method against that mixture for thirty episodes. 
We then collect the average payoff against each opponent mixture and sort the averages in descending order.

Figure~\ref{fig:coverage} shows Q-Mixing's performance across the opponent mixed-strategy space.
Learning in this domain is fairly easy, so both methods are expected to win against almost every mixed-strategy opponent.
Nevertheless, Q-Mixing generalizes across strategies better, albeit with slightly higher variance.
While the improvement of Q-Mixing is incremental, we interpret this first evaluation as validating the promise of Q-Mixing for coverage across mixtures.

\begin{figure}[ht]
    \centering
    \includegraphics[width=\columnwidth]{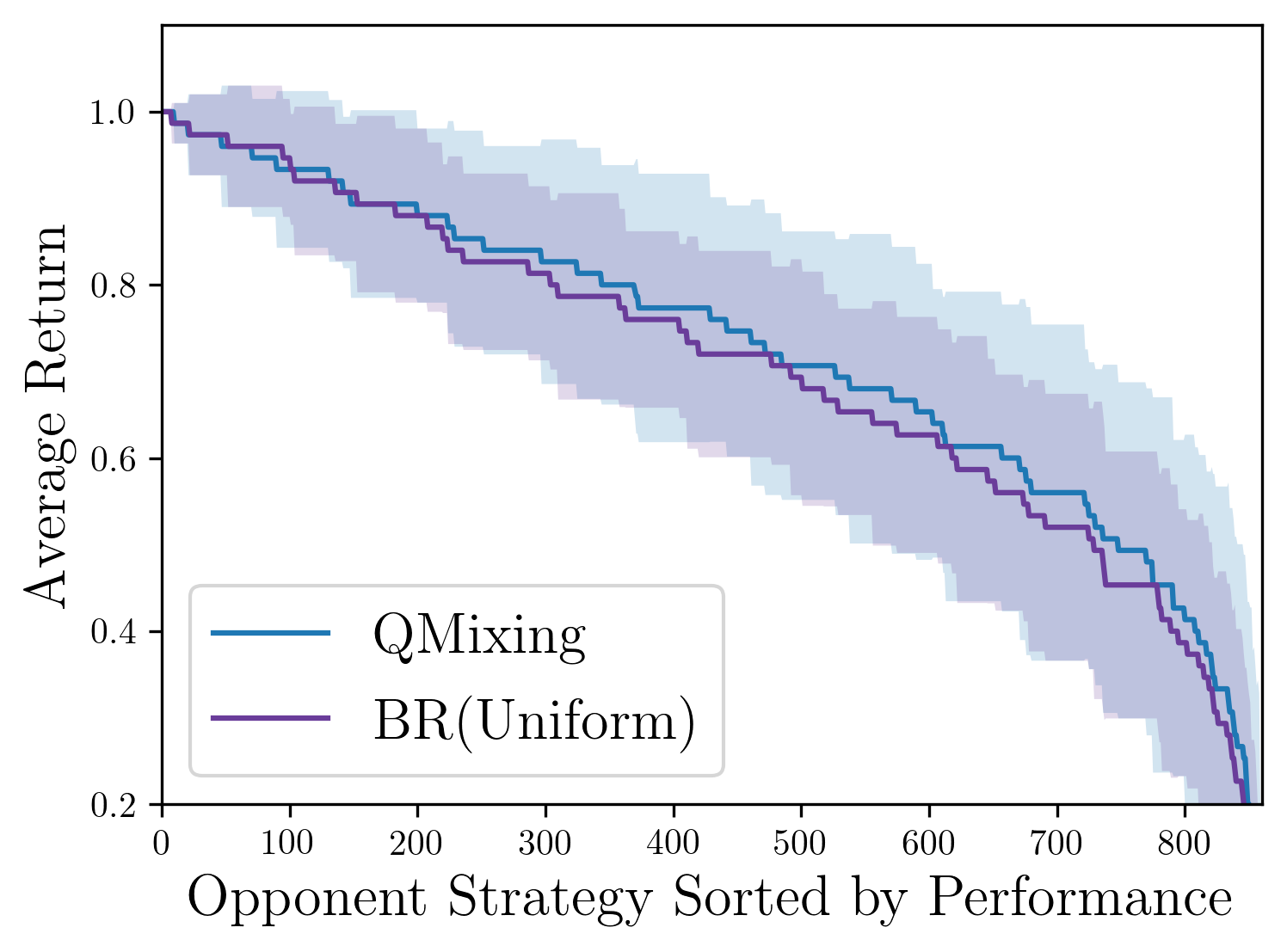}
    \caption{
        Q-Mixing's coverage of the opponent's strategy space \NEW{in the soccer game.} 
        The strategies are sorted \NEW{per-BR-method by the BR's return}.
        Shaded region represents a 95\% confidence interval over five random seeds ($\mathit{df}=4,~t=2.776$).
        BR(Mixture) is trained against the uniform mixture opponent.
        The two methods use the same number of experiences.
    }
    \label{fig:coverage}
\end{figure}

\subsection{Sequential Social Dilemma Game}
\NEW{%
In this section, we evaluate the performance of Q-Mixing in a more complex setting, in the domain of sequential social dilemmas (SSD). 
These dilemmas present each agent with difficult strategic decision where they must balance collective and self interest throughout repeated interactions. 
Like our grid-world soccer example, our SSD \emph{Gathering} game has two players, but unlike the simpler game it is general sum.
Most importantly, the game exhibits imperfect observation, that is, the environment is only partially observable and the players have private information.}
We investigated the following research questions on this environment:
(1)~Can Q-Mixing work with Q-values based on observations from a complex environment? 
(2)~Can the use of an opponent policy classifier that updates the opponent distribution improve performance?
(3)~Can policy distillation be used to compress a Q-Mixing based policy, while preserving performance?

\NEW{%
The Gathering game is a gridworld instantiation of a tragedy-of-the-commons game~\cite{perolat17gathering}. 
In this game, the players compete to harvest apples from an orchard; however, the regrowth rate of the apples is proportional to the number of nearby apples.
Limited resources puts the players at odds, because they must work together to maintain the orchard's health while selfishly amassing apples.
In addition to picking apples, the players may tag all players in a beam in front of them removing them from the game for a short fixed duration.
Moreover, the agents face partial observation as they are only able to see a small window in front of them and do not know the full state of the game.
}

\NEW{%
A visualization of the environment and more details are included in Section~\ref{sec:details:gathering}.
The policies used in this environment are implemented as deep neural networks with two hidden layers with 50 units and ReLU activations.
rAgents can choose to move in the four cardinal directions, turn left or right, tag the other player, or perform no action. 
The policies are trained using the Double DQN algorithm~\cite{hasselt16ddqn}, and the opponent's parameters are always kept fixed (no training updates are performed) for the duration of the experiments.}

All statistical results below are reported with a 95\% confidence interval based on the Student's $t$-distribution ($\mathit{df}=4,~t=2.776$). 
To generate this interval, we perform almost the entire experimental process, as described in each experiment's subsection, entirely across five random seeds.
\NEW{For each seed we do not resample hyperparameters nor generate new opponents.}
A final result is produced for each random seed, often resulting in a sample of payoffs for a variety of player profiles.
Then the sample statistics from each random seed are utilized to construct the confidence interval.

\subsubsection{Empirical Verification of Q-Mixing}
We experimentally evaluate Q-Mixing\NEW{-Prior} on the Gathering game, allowing us to confirm our algorithm's robustness to complex environments. 
First, a set of three opponent policies are generated through \NEW{DO.}
A BR is trained against each of the independent pure-strategies. 
A baseline BR$(\sigma_{-i})$ is trained directly against \NEW{the uniform mixture of the same opponents}.
\NEW{%
We evaluate Q-Mixing-Prior's ability to transfer strategic knowledge by first simulating its performance against the mixed-strategy opponent.
Then we compare this performance to the performance that BR$(\sigma_{-i})$'s strategy that was learned by training directly against the mixed-strategy.
}

\NEW{%
The training curves for $\text{BR}(\sigma_{-i})$ is presented in Figure~\ref{fig:qmixing_training_curve}. 
From this graph we can see that Q-Mixing-Prior is able to achieve performance stronger than $\text{BR}(\sigma_{-i})$ through transfer learning.
This is possible because the pure-strategy BR are able to learn stronger policies from specialization compared to the mixed-strategy BR. 
For example, when playing only against $\pi_{-i}^0$ the best-response $\text{BR}(\pi_{-i}^0)$  achieves a return of $227.93\pm21.01$ while $\text{BR}(\sigma_{-i})$ achieves $207.98\pm15.92$.
This verifies our hypothesis that Q-Mixing is able to transfer knowledge under partial observation. }

\begin{figure}[!ht]
    \centering
    \includegraphics[width=\columnwidth]{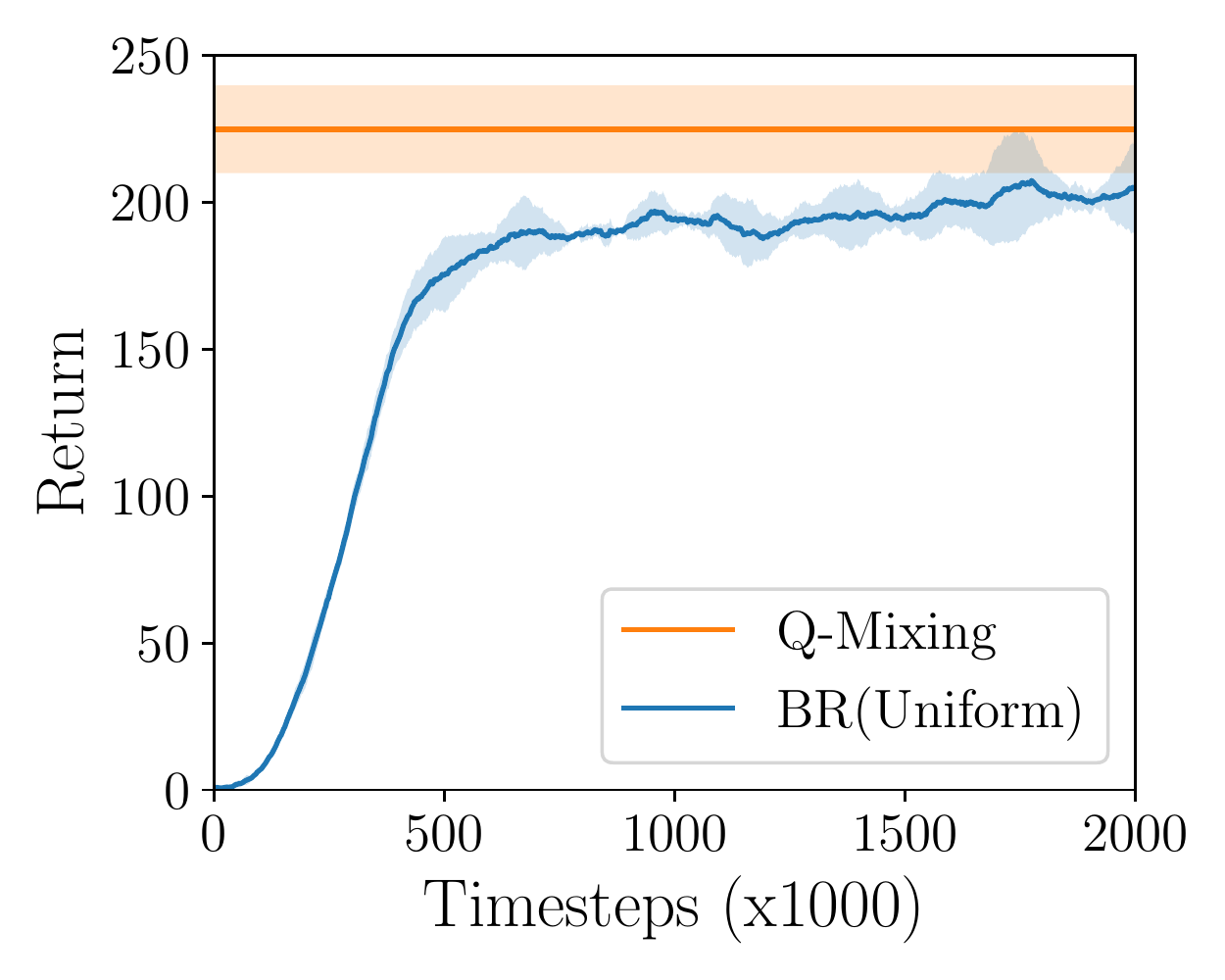}
    \caption{
    \NEW{%
        Learning curve of BR($\sigma_{-i}$) compared to the performance of Q-Mixing-Prior on the Gathering game.
        Shaded region represents a 95\% confidence interval over five random seeds ($\mathit{df}=4,~t=2.776$).}
    }
    \label{fig:qmixing_training_curve}
\end{figure}

\subsubsection{Opponent Classification}
\NEW{%
Our next research question is: can the use of an OPC that updates the opponent distribution in Q-Mixing improve its performance? 
During play against an opponent sampled from the mixed-strategy, the player is able to gather evidence about which opponent they are playing. 
We hypothesize that leveraging this evidence to weight the importance of the respective BR's Q-values higher will improve Q-Mixing's performance.
}

To verify this hypothesis, we train an OPC using the replay buffers associated with each BR policy.
These are the same buffers that were used to train the BRs, and cost no additional compute to collect.
This data is used to train an OPC that outputs a distribution over opponent pure strategies for each observation.
The OPC is implemented with a deep neural network two hidden layers with \NEW{50 units and ReLU activations.} 
To train this classifier we take each experience from the pure-strategy BR replay buffers and assign them each a class label for each respective pure-strategy.
The classifier is then trained to predict this label using a cross-entropy loss. 

\NEW{%
We evaluate Q-Mixing-OPC by testing the performance on a representative coverage of the mixed-strategy opponents illustrated in Figure~\ref{fig:gathering_coverage}.
We found that the Q-Mixing-OPC policy performed stronger against the full opponent strategy coverage supporting our hypothesis that an OPC can identify the opponent's pure-strategy and enable Q-Mixing to chose the correct BR policy.
However, our method has a much larger variance, which can be explained by the OPC's misclassification of the opponent resulting in poorly chosen actions throughout the episode.
}

\begin{figure}[!ht]
    \centering
    \includegraphics[width=\columnwidth]{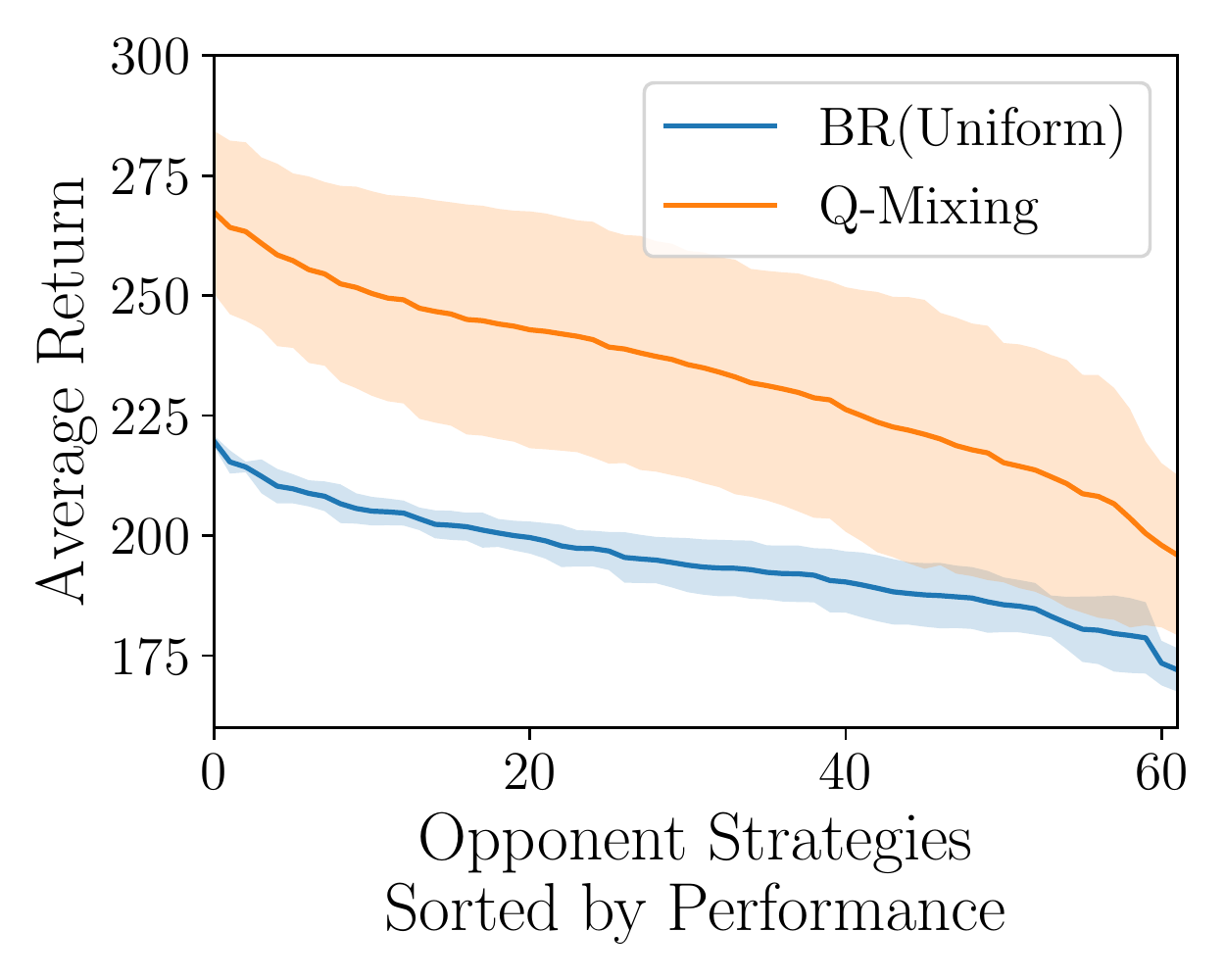}
    \caption{
    \NEW{%
        Q-Mixing with an OPC's coverage of the opponent's strategy space on the Gathering game. 
        Each strategy is a mixture over the 3 opponent's truncated to the tenths place.
        The strategies are sorted by the respective BR's performance.
        Shaded regions represent a 95\% confidence interval over five random seeds ($\mathit{df}=4,~t=2.776$).}
    }
    \label{fig:gathering_coverage}
\end{figure}

\subsubsection{Policy Distillation}
In Q-Mixing we need to compute Q-values for each of the opponent's pure strategies.
This can be a limiting factor in parametric policies, like deep neural networks, where our policy's complexity grows linearly in the size of the support of the opponent's mixture.
This can become unwieldy in both memory and computation. 
To remedy these issues, we propose using policy distillation to compress a Q-Mixing policy into a smaller parametric space \cite{hinton14distill}. 

In the policy distillation framework, a larger neural network referred to as the \emph{teacher} is used as a training target for a smaller neural network called the \emph{student}.
In our experiment, the Q-Mixing policy is the teacher to a student neural network that is the size of a single best-response policy.
The student is trained in a supervised learning framework, where the dataset is the concatenated replay buffers from training pure-strategy best-responses.
This is the same dataset that was used in opponent classifying, which was notably generated without running any additional simulations.
A batch of data is sampled from the replay-buffer and the student predicts $Q^S$ the teacher's $Q^T$ Q-values for each action.
The student is then trained to minimize the KL-divergence between the predicted Q-values and the teacher's true Q-values.
There is a small wrinkle, the policies produce Q-values, and KL-divergence is a metric over probability distributions.
To make this loss function compatible, the Q-values are transformed into a probability distribution by softmax with temperature~$\tau$.
The temperature parameter allows us to control the softness of the maximum operator.
A high temperature produces actions that have a near-uniform probability, and as the temperature is lowered the distribution concentrates weight on the highest Q-Values~\cite{sutton18rl}.
The benefit of a higher temperature is that more information can be passed from the teacher to the student about each state. 
Additional training details are described in Section~\ref{sec:details:distill}.

The learning curve of the student is reported in Figure~\ref{fig:distill}. 
\NEW{%
We found that the student policy was able to recover the performance of Q-Mixing-Prior, albeit with slightly higher variance. 
This study did not include any effort to optimize the student's performance, thus further improvements with the same methodology may be possible.
}
This result confirms our hypothesis that policy distillation is able to effectively compress a policy derived by Q-Mixing.

\begin{figure}[!ht]
    \centering
    \includegraphics[width=\columnwidth]{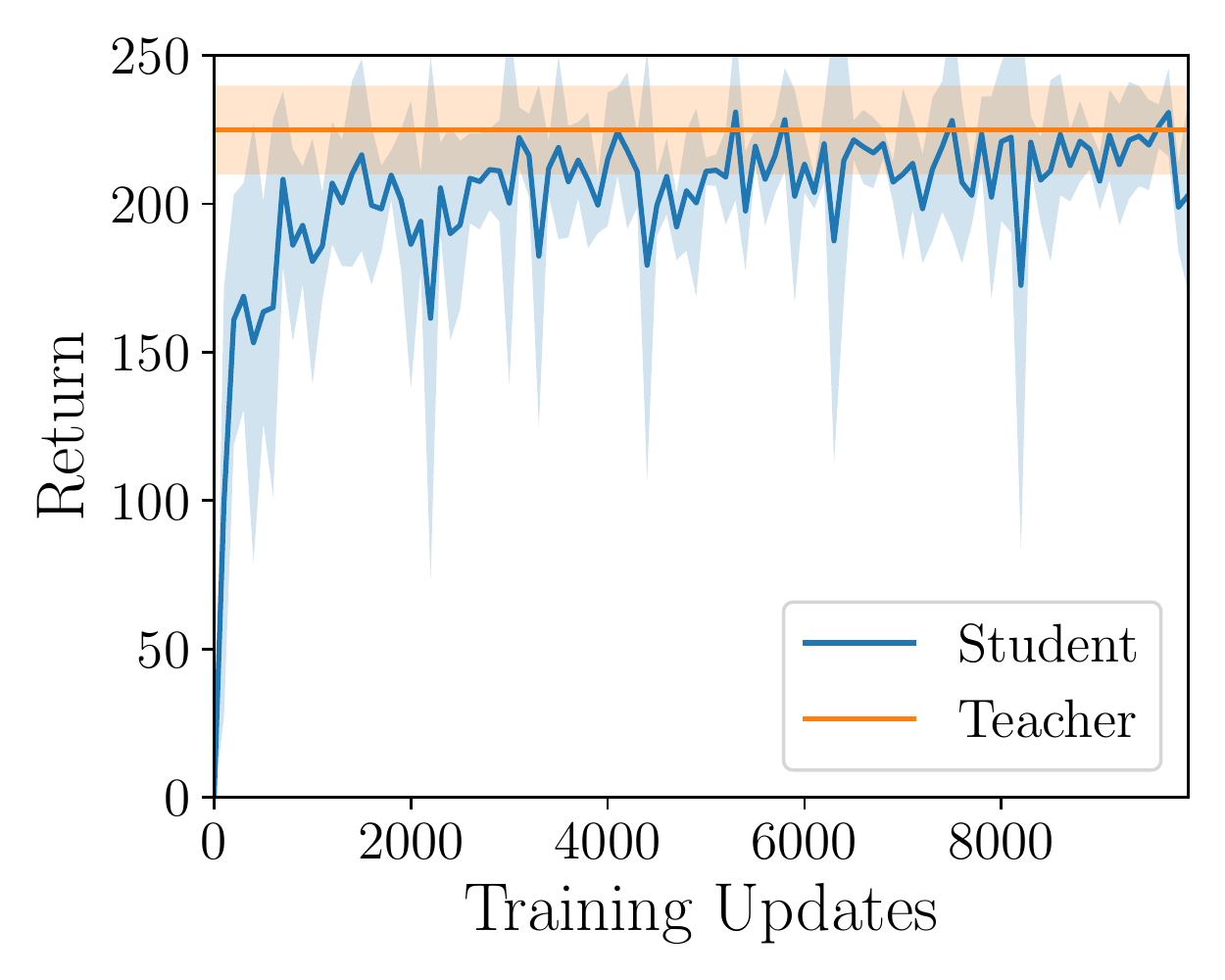}
    \caption{
        \NEW{%
        Policy distillation simulation performance over training on the Gathering game.
        The teacher Q-Mixing-Prior is used as a learning target for a smaller student network.
        Shaded region represents a 95\% confidence interval over five random seeds ($\mathit{df}=4,~t=2.776$).
        }
    }
    \label{fig:distill}
\end{figure}

\section{Related Work}

\subsection{Multiagent Learning}
The most relevant line of work in multiagent learning studies the interplay between centralized learning and decentralized execution \cite{tan93, kraemer14}. 
In this regime, agents are able to train with additional information about the other agents that would not be available during evaluation (e.g., the other agent's state~\cite{rashid18qmix}, actions~\cite{claus98}, or a coordination signal~\cite{greenwald03correlatedq}). 
The key question then becomes: how to create a policy that can be evaluated without additional information? 
A popular approach is to decompose the joint-action value into independent Q-values for each agent \cite{guestrin01, he16opponentmodeling, sunehag17vdn, rashid18qmix, mahajan19maven}.
An alternative approach is to learn a centralized critic, which can train independent agent policies \cite{gupta17, lowe17maddpg, foerster18coma}.
Some work has proposed constructing meta-data about the agent's current policies as a way to reduce the learning instability present in environments where other agents' policies are changing \cite{foerster17replay, omidshafiei17}. 

A set of assumptions that can be made is that all players have fixed strategy sets.
Under these assumptions, agents could maintain more sophisticated beliefs about their opponent~\cite{zheng18}, and extend this to recursive-reasoning procedures~\cite{yang19}.
These lines of work focus more on other-player policy identification and are a promising future direction for improving the quality of the OPC\@.
One more potential extension of the OPC is to consider alternative objectives.
Instead of focusing exclusively on predicting the opponent, in safety-critical situations an agent will want to consider an objective that accounts for inaccurate prediction of their opponent.
The Restricted Nash Response~\citet{johanson07} encapsulates this measure by balancing maximal performance if the prediction is correct balanced with reasonable performance if the prediction is inaccurate.
While both of these directions of work focus around opponent-policy prediction, they do so under a largely different problem statement.
Most notably, these works do not consider varying the distribution of the opponent policies, and as such extending these works to fit this new problem statement would constitute its own study.

Instead of building or using complete models of opponents, one may use an implicit representation of their opponents.
By choosing to build an explicit model of their opponent they circumvent needing a large amount of data to reconstruct the opponent policy.
An additional benefit is that there are less likely to be errors in the model that need to be overcome, because a perfect reconstruction of a complex policy is no longer necessary.
\citet{he16opponentmodeling} proposes DRON which uses a learned latent action prediction of the opponent as conditioning information to the policy (in a similar nature to the opponent-actions in the joint-action value area).
They also show another version DRON which uses a Mixture-of-Experts \cite{jacobs91experts} operation to marginalize over the possible opponent behaviors. 
More formally, they compute the marginal $\sum_{a_{-i}}\pi_{-i}(a_{-i}|s_{i})Q_{i}(s_{i}, a_{i}, a_{-i})$, which is over the action-space and utilized to condition the expected Q-value.
Q-Mixing is built off a similar style of marginalization;  however, it marginalizes over the policy-space of the opponent instead of the action-space.
Moreover, Q-Mixing depends on independent BR Q-values against each opponent-policy, where DRON learns a single Q-network.
\citet{bard13} proposes implicitly modelling opponents through the payoffs received from playing against a portfolio of the agent's policies. 

Most multiagent learning work focuses on the simultaneous learning of many agents, where there is not a distribution over a static set of opponent policies.
This difference in methods can have strong influences on the final learned policies.
For example, when a policy is trained concurrently with another particular opponent-policy they may overfit to each other's behavior.
As a result, the final learned policy may be unable to coordinate or play well against any other opponent policy~\cite{bard19hanabi}
Another potential problem is that each agent now faces a dynamic learning problem, where they must learn a moving target (the other agent's policy) \cite{foerster18lola, tesauro03hyperq}.

\subsection{Multi-Task Learning}
Multiagent learning is analogous to multi-task learning.
In this reconstruction, each strategy/policy is analogous to solving a different task. 
And the opponent's strategy would be the distribution over tasks. 
Similar analogies to tasks can be made with objectives, goals, contexts, etc.~\cite{kaelbling93goals, ruder17multitask}.

The multi-task community has roughly separated learnable knowledge into two categories~\cite{snel14}. 
\emph{Task relevant} knowledge pertains to a particular task \cite{jong05, walsh06}; meanwhile, \emph{domain relevant} knowledge is common across all tasks \cite{caruana97, foster02domain, konidaris06}. 
Work has been done that bridges the gap between these settings; for example, knowledge about a task could be a curriculum to utilize over tasks~\cite{czarnecki18mixmatch}.
In task relevant learning, a leading method is to identify state information that is irrelevant to decision making, and abstract it away \cite{jong05, walsh06}.
Our work falls into the same task relevant category, where we are interested in learning responses to particular opponent policies. 
What differentiates our work from the previous work is that we learn Q-values for each task independently, and do not ignore any information. 

Progressively growing neural networks is another similar line of work~\cite{rusu16progressive}, focused on a stream of new tasks.
\citet{schwarz18progcomp} also found that network growth could be handled with policy distillation.

\subsection{Transfer Learning}
Transfer learning is the study of reusing knowledge to learn new tasks/domains/policies. 
Within transfer learning, we look at either how knowledge is transferred, or what kind of knowledge is transferred. 
Previous work on how to transfer knowledge has tended to follow one of two main directions \cite{pan10survey, lampinen19taskgeneral}.
The \emph{representation transfer} direction considers how to abstract away general characteristics about the task that are likely to apply to later problems. 
\citet{ammar15transfer} present an algorithm where an agent collects a shared general set of knowledge that can be used for each particular task.
The second direction directly transfers parameters across tasks; appropriately called \emph{parameter transfer}. 
\citet{taylor05} show how policies can be reused by creating a projection across different tasks' state and action spaces.

In the literature, transferring knowledge about the opponent's strategy is considered intra-agent transfer \cite{silva19transfer}.
The focus of this area is on \emph{adapting to other agents}. 
One line of work in this area focuses on ad hoc teamwork, where an agent must learn to quickly interact with new teammates \cite{barrett15, bard19hanabi}. 
The main approach relies on already having a set of policies available, and learning to select which policy will work best with the new team \cite{barrett15}. 
Another work proposes learning features that are independent of the game, which can either be qualities general to all games or strategies \cite{banerjee07}. 
Our study differs from these in its focus on the opponent's policies as the source of information to transfer.

\section{Conclusions}

This paper introduces Q-Mixing, an algorithm for transferring knowledge across distributions of opponents. 
We show how Q-Mixing relies on the theoretical relationship between an agent's action-values, and the strategy employed by the other agents. 
A first empirical confirmation of the approach is demonstrated using a simple grid-world soccer environment. 
In this environment, we show how experience against pure strategies can transfer to construction of policies against mixed-strategy opponents. 
Moreover, we show that this transfer is able to cover the space of mixed strategies with no additional computation. 

Next, we tested our algorithm's robustness on a sequential social dilemma.
In this environment, we show the benefit of introducing an opponent policy classifier, which uses the agent's observations to update its belief about the opponent's policy.
This updated belief is then used to revise the weighting of the respective BR Q-values.

Finally, we address the concern that a Q-Mixing policy may become too large or computationally expensive to use.
To ease this concern we demonstrate that policy distillation can be used to compress a Q-Mixing policy into a much smaller parameter space.

\bibliography{bibliography}
\bibliographystyle{icml2021}

\clearpage
\appendix
\section{Q-Mixing}
\label{sec:q-mixing-proof}

\setcounter{theorem}{0}
\begin{theorem}[Single-State Q-Mixing]
\label{thm:qmix}
    Let $Q^*_i(\cdot \mid \pi_{-i})$, $\pi_{-i}\in\Pi_{-i}$, denote the optimal strategic response Q-value against opponent policy $\pi_{-i}$. 
    Then for any opponent mixture $\sigma_{-i} \in\Delta(\Pi_{-i})$, the optimal strategic response Q-value is given by
\begin{displaymath}
Q^*_i(a_i\mid\sigma_{-i}) = \sum_{\pi_{-i}\in\Pi_{-i}} \sigma_{-i}(\pi_{-i}) Q^*_i(a_i\mid\pi_{-i}).
\end{displaymath}
\end{theorem}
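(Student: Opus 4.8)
The plan is to reduce the statement to the linearity of expectation, exploiting the fact that in a single-state setting the per-action Q-value carries no continuation term and is therefore independent of the responder's own policy. First I would fix the single state $s$ and unpack the definition of the optimal response Q-value. Against a pure opponent policy $\pi_{-i}$, playing action $a_i$ yields expected payoff $Q^*_i(a_i \mid \pi_{-i}) = \mathbb{E}_{a_{-i}\sim\pi_{-i}(\cdot\mid s)}\left[ r_i(s, a_i, a_{-i}) \right]$, since there is no successor state and hence no discounted future value to add. The key observation to record here is that this quantity involves neither a maximization over $a_i$ nor any dependence on what agent $i$ would do subsequently; the superscript $*$ is therefore innocuous at the level of individual actions.

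Next I would describe the opponent mixture as a two-stage sampling process: draw a pure policy $\pi_{-i}\sim\sigma_{-i}$, then draw $a_{-i}\sim\pi_{-i}(\cdot\mid s)$. This induces the marginal action distribution $\sum_{\pi_{-i}\in\Pi_{-i}}\sigma_{-i}(\pi_{-i})\,\pi_{-i}(a_{-i}\mid s)$ over opponent joint actions. Substituting this distribution into the expected-reward expression for $Q^*_i(a_i\mid\sigma_{-i})$ and interchanging the finite sum over $\pi_{-i}$ with the expectation over $a_{-i}$ — justified by linearity of expectation — collapses the expression into $\sum_{\pi_{-i}\in\Pi_{-i}}\sigma_{-i}(\pi_{-i})\,\mathbb{E}_{a_{-i}\sim\pi_{-i}}[r_i(s,a_i,a_{-i})]$, which is exactly $\sum_{\pi_{-i}\in\Pi_{-i}}\sigma_{-i}(\pi_{-i})\,Q^*_i(a_i\mid\pi_{-i})$, as claimed.

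The step I expect to be the main conceptual obstacle — rather than the calculation — is arguing precisely why the single-state assumption is essential, i.e.\ why the identity fails in the general multi-state case. The reason is that with successor states the Q-value acquires a continuation term $\gamma\,\mathbb{E}[V^*_i(s')]$, and $V^*_i = \max_{a_i} Q^*_i$ is a maximum and hence convex, not linear, in the opponent mixture; the max and the expectation over $\sigma_{-i}$ do not commute. I would therefore emphasize in the write-up that the absence of any continuation value is exactly what permits the clean interchange of sum and expectation, and I would keep the argument entirely at the level of expected immediate rewards so that no appeal to a Bellman fixed point is required.
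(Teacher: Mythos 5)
Your proposal is correct and follows essentially the same route as the paper's proof: both unpack the single-state Q-value as an expected immediate reward, expand the opponent mixture into a two-stage sum over pure policies and their actions, and conclude by exchanging the finite sums via linearity of expectation. Your added remarks---that the superscript $*$ is innocuous absent a continuation term, and that the identity fails in the multi-state case because $\max_{a_i} Q^*_i$ is convex rather than linear in $\sigma_{-i}$---are sound clarifications of points the paper leaves implicit, but they do not change the underlying argument.
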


\begin{proof}    

The definition of Q-value is as follows \cite{sutton18rl}:
\begin{displaymath}
Q_i^*(a_i)=\sum_{r_i}p(r_i \mid a_i)\cdot r_i.
\end{displaymath}

In a multiagent system, the dynamics model $p$ suppresses the complexity introduced by the other agents. 
We can unpack the dynamics model to account for the other agents as follows:
\begin{displaymath}
p(r_i \mid a_i)=\sum_{\pi_{-i}}\sum_{a_{-i}}\pi_{-i}(a_{-i})\cdot p(r_i \mid \bm{a}).
\end{displaymath}

We can then unpack the strategic response value as follows:
\begin{displaymath}
Q_i^*(a_i \mid \pi_{-i}) = \sum_{a_{-i}} \pi_{-i}(a_{-i}) \sum_{r_i} p(r_i \mid \bm{a}) \cdot r_i.
\end{displaymath}

Now we can rearrange the expanded Q-value to explicitly account for the opponent's strategy. The independence assumption enables the following re-writing by letting us treat the opponent's mixed strategy as a constant condition. 
\begin{align*}
    Q^*_i(a_i &\mid \sigma_{-i}) \\
    &= \sum_{r_i}\sum_{\pi_{-i}}\sigma_{-i}(\pi_{-i}) \sum_{a_{-i}}\pi_{-i}(a_{-i}) p(r_i \mid \bm{a}) \cdot r_i \\
    &=\sum_{\pi_{-i}} \sigma_{-i}(\pi_{-i})\sum_{a_{-i}} \pi_{-i}(a_{-i}) \sum_{r_i} p(r_i \mid \bm{a}) \cdot r_i \\
    &=\sum_{\pi_{-i}} \sigma_{-i}(\pi_{-i}) Q_i^*(a_i \mid \pi_{-i}).
\end{align*}
\end{proof}

\begin{algorithm2e*}[!ht]
\DontPrintSemicolon
\caption{Value Iteration: Q-Mixing}
\label{alg:vi}
\KwIn{$\mathcal{S}, \mathcal{A}, \mathcal{T}, \mathcal{R}, \epsilon, \gamma$} 
$V_0(s\mid\sigma_{-i})\gets \sum_{\pi_{-i}}\sigma_{-i}(\pi_{-i})Q(s,a\mid\pi_{-i})$\; 
\Do{$\exists_{s\in\mathcal{S}}~|V_t(s) - V_{t-1}(s)| > \epsilon$}{
    $Q_t(s, a\mid\sigma_{-i})\gets \sum_{\pi_{-i}} \psi(\pi_{-i} \mid s, \sigma_{-i}) \sum_{s',~r} \mathcal{T}(s', r \mid s, a, \pi_{-i}) \left[ r + \gamma V_{t-1}(s'\mid\sigma_{-i}) \right]$\;
    $V_t(s\mid\sigma_{-i})\gets \max_a Q_t(s,a\mid\sigma_{-i})$\;
    $\pi_t(s\mid\sigma_{-i})\gets \argmax_a Q_t(s,a\mid\sigma_{-i})$\;
}
\KwOut{$V_t,~Q_t,~\pi_t$}
\end{algorithm2e*}

\section{Experimental Details}
\label{sec:details}
Double DQN was used for all experiments~\cite{hasselt16ddqn}. 
Determining the correct hyperparameters to utilize is a non-trivial problem, because the learning dynamics may vary given different opponent policies. 
To this end, we sought to find a method for selecting hyperparameters that performs well against a diversity in opponents, while also being computationally tractable to run.
We choose hyperparameters that performed best against a uniform-mixture of a fixed set of opponents (five for the soccer environment, and three for cyber-security environment). 
The opponent policies were generated through PSRO, and were sampled from the resulting strategy sets.
For both experiments we also included a random opponent to each strategy set, because we expect this opponent to be one of the more challenging opponents to learn against.

However, there's a chicken-and-egg problem present.
In order to utilize PSRO we would also need the aformentioned hyperparameters.
As a stop-gap, we choose initial hyperparameters, by evaluating their performance against a random opponent. 
In summary, for both environments we select hyperparameters by:
\begin{enumerate}
    \item Sample 200 possible hyperparameter settings, and choose the one that best-performs against a random opponent.
    \item Run PSRO until it exceeds a three day walltime.
    \item Sample a fixed set of policies from the strategy set generated from PSRO. 
    \item Sample 200 hyperparameter settings, and evaluate them against uniform mixed-strategy of the four PSRO policies and the random opponent.
\end{enumerate}

We chose to evaluate our hyperparameters against the mixed-strategy opponent, because we believed it offered the most benefit to the baseline method. 
Future work could look at the interplay of the hyperparameter selection method and the respective performance of both Q-Mixing, and learning a BR directly against a mixed-strategy.

The non-standard hyperparameters listed throughout the appendix are defined as follows:
\begin{description}[leftmargin=1cm,labelindent=1cm]
    \item[Timesteps] Total number of experiences collected during training.
    \item[Exploration Fraction] Fraction of the training timesteps used for exploration. The exploration policy is $\epsilon$-greedy, and starts with $\epsilon=1.0$, and linearly decays to \emph{Exploration Final} hyperparameter.
    \item[Exploration Final $\epsilon$] The final $\epsilon$ value.
    \item[Training Frequency] Timestep frequency for performing updates.
    \item[Training Starts] Number of timesteps experienced before training begins.
    \item[Number of Simulations] The number of simulated episodes performed for evaluation.
\end{description}

\subsection{Soccer}
\label{sec:details:soccer}
The soccer environment is a gridworld comprised of two players, one ball, and two goals.
The soccer field is a $5\times 4$ matrix where the goals are off-field on the left and right sides of the field.
The ball can spawn in one two positions in the middle of the field, and the players spawn on either side of the spawn points.
A graphical representation of the soccer environment can be see in Figure~\ref{fig:soccer}.
The players are rewarded for moving the ball into the opponent's goal (the one they spawned furthest from). 

\begin{figure}[ht]
    \centering
    \includegraphics[width=0.45\textwidth]{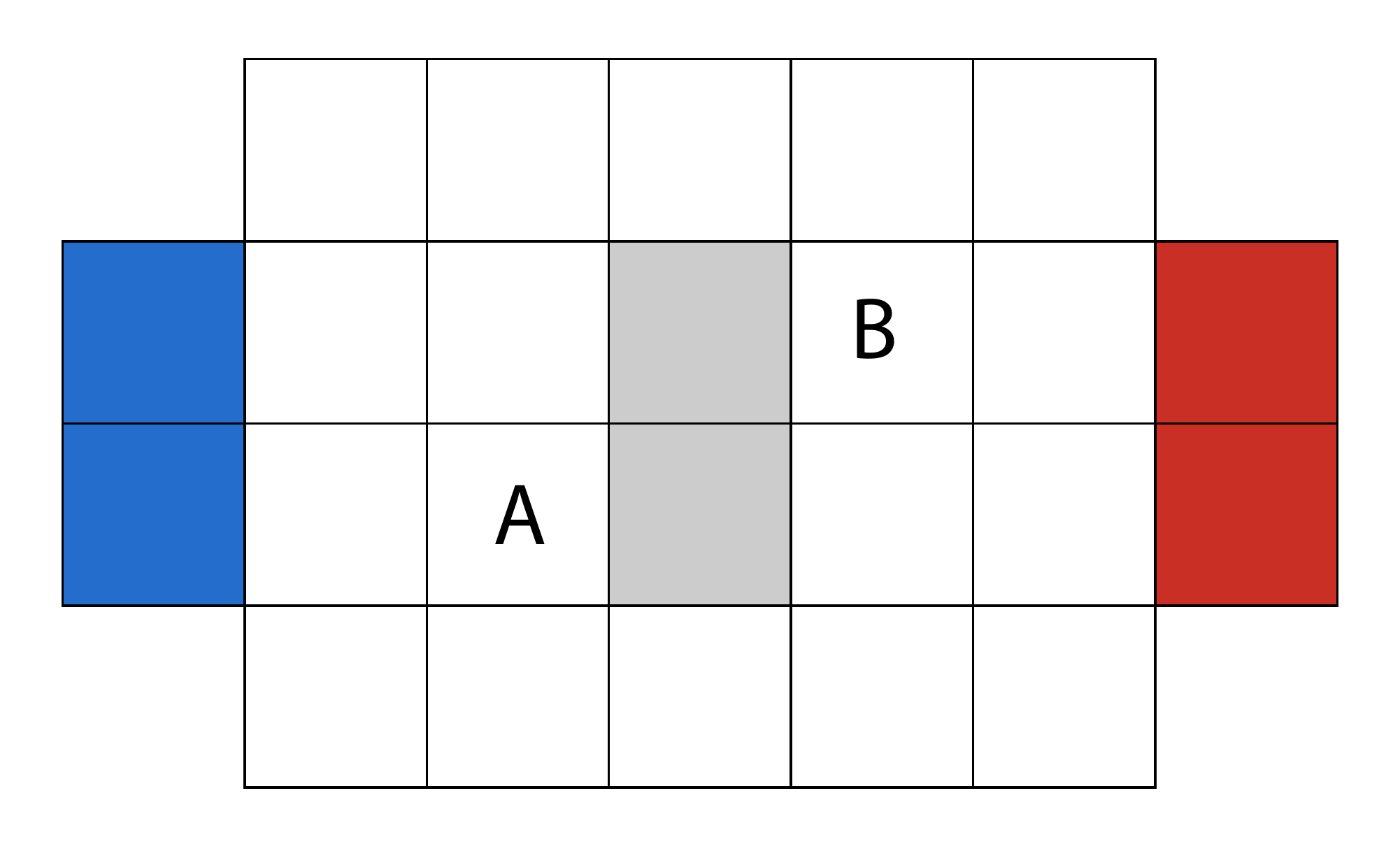}
    \caption{
        Grid-world soccer environment. 
        The letters represent the respective players. 
        The ball may spawn in either middle highlighted tile, and the player's goal is to score in the opposite net.
    }
    \label{fig:soccer}
\end{figure}

In this experiment we select five opponent policies, and hold them fixed throughout the experiments. 
We consider the hyperparameters in Table~\ref{tab:hparam_ranges:soccer} as candidates, and found the hyperparameters listed in Table~\ref{tab:hparams:soccer} to perform best.

To ensure that each method receives the same simulation budget, we allow each pure-strategy BR 60000 timesteps. 
An interesting future direction is investigating the trade-off of simulation budget and performance that exists between these methods.

\begin{table}[ht!]
  \caption{Soccer environment hyperparameters against a mixed strategy.}
  \label{tab:hparams:soccer}
  \centering
  \begin{tabular}{lr}
    \toprule
    Hyperparameter & Value\\
    \midrule
    Optimizer &Adam\\
    Learning Rate &0.0003\\
    Buffer Size &3000\\
    Gamma &0.99\\
    Timesteps &300000\\
    Batch Size &64\\
    Exploration Fraction &0.33\\
    Exploration Final $\epsilon$ &0.01\\
    Training Frequency &1\\
    Training Starts &300\\
    \bottomrule
  \end{tabular}
\end{table}

\begin{table}[ht!]
  \caption{Soccer environment considered hyperparameters.}
  \label{tab:hparam_ranges:soccer}
  \centering
  \begin{tabular}{lr}
    \toprule
    Hyperparameter & Value\\
    \midrule
    Batch Size &32, 64\\
    Buffer Size &300, 1000, 3000, 10000\\
    Learning Rate &1e-3, 3e-3, 1e-4, 3e-4\\
    Timesteps &10000, 30000, 100000, 300000\\ 
    Exploration Fraction &0.1, 0.3, 0.4, 0.7\\
    Training Starts &100, 300, 1000\\
    \bottomrule
  \end{tabular}
\end{table}

The trained DQN was approximated using a 2 hidden layer neural network. 
The hidden layers each had 50 units and were fully connected with a ReLU activation.
The possible actions are moving in any of the four cardinal directions, or staying in place.
The input is a vector of length 120, to represent each of the 20 positions on the board having one of the following states: 
\begin{itemize}
    \item Player 0 is on this square and does not have the ball.
    \item Player 0 is on this square and is holding the ball.
    \item Player 1 is on this square and does not have the ball.
    \item Player 1 is on this square and is holding the ball.
    \item The ball is on the ground on this square.
    \item Unoccupied space.
\end{itemize}

\subsubsection{Opponent Policy Classifier}
The hyperparameters selected for training the opponent classifier are listed in Table~\ref{tab:hparams:soccer_opc}. 
The replay buffers gathered from training best-responses against each opponent were merged into one dataset.
The classifier was trained to predict the opponent for each observation in the dataset.
This resulted in 15000 data points, which were randomly split 90-10 between training and validation.

\begin{table}[ht]
  \caption{Markov-Soccer opponent policy classifier hyperparameters.}
  \label{tab:hparams:soccer_opc}
  \centering
  \begin{tabular}{lr}
    \toprule
    Hyperparameter & Value\\
    \midrule
    Optimizer &Adam \\
    Learning Rate &$5\cdot 10^{-5}$\\
    Loss &Cross Entropy \\
    Batch Size &64 \\
    \bottomrule
  \end{tabular}
\end{table}

The classifier was a neural network with the same architecture as a single policy; however, the last layer is modified to choose opponents rather than actions.
We did not perform a hyperparameter search on this network or learning algorithm.

\subsection{Gathering}
\label{sec:details:gathering}
The Gathering environment is a gridworld tragedy-of-the-commons game. 
For motivation and background on the environment please refer to \citet{perolat17gathering, leibo17ssd}.
The observation space in the Gathering environment is the rectangular area in front of the agent stretching 20 cells forward with a width of 10.
The agents simultaneous take actions of either moving in the four cardinal directions, rotating left or right, tagging the other agent with a time-out beam, or taking no action. 
A visual depiction of the environment is provided in Figure~\ref{fig:gathering_map}.

\begin{figure*}
    \centering
    \includegraphics[width=0.5\textwidth]{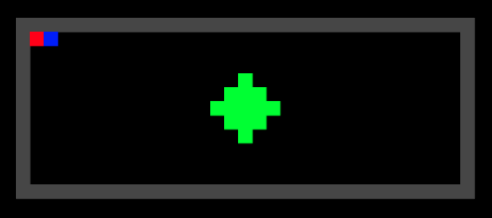}
    \caption{%
        Gathering game map.
        Players spawn in either the blue or red tile, and apples on the green tiles. 
    }
    \label{fig:gathering_map}
\end{figure*}

The hyperparameters that were searched over are provided in Table~\ref{tab:hparams_ranges:gathering} and the final hyperparameters are in Table~\ref{tab:hparams:gathering}.

\begin{table}[ht]
  \caption{Gathering environment hyperparameters.}
  \label{tab:hparams:gathering}
  \centering
  \begin{tabular}{lr}
    \toprule
    Hyperparameter & Value\\
    \midrule
    Optimizer &Adam\\
    Learning Rate &$3\cdot 10^{-4}$\\
    Gradient Norm Clip &None\\
    Buffer Size &30000\\
    Gamma &0.99\\
    Batch Size &64\\
    Exploration Fraction &0.3 \\
    Exploration Final $\epsilon$ &0.03\\
    Training Starts &1000\\
    \bottomrule
  \end{tabular}
\end{table}

\begin{table*}[ht]
  \caption{Gathering environment considered hyperparameters.}
  \label{tab:hparams_ranges:gathering}
  \centering
  \begin{tabular}{lr}
    \toprule
    Hyperparameter & Value\\
    \midrule
    Learning Rate &3e-3, 1e-4, 3e-4, 1e-5, 3e-5\\
    Gradient Norm Clip &None, 0.1, 1, 10\\
    Buffer Size &3e4, 5e4, 8e5, 1e5\\
    Batch Size &32, 64\\
    Timesteps &3e5, 5e5, 7e5, 1e6, 1.2e6, 1.5e6, 1.7e6, 2e6\\
    Exploration Timesteps &1e5, 2e5, 3e5, 4e5, 5e5, 7e5\\
    \bottomrule
  \end{tabular}
\end{table*}

\subsection{Policy Distillation}
\label{sec:details:distill}
In the policy distillation framework, a larger neural network referred to as the ``teacher'' is used as a training signal for a smaller neural network called the ``student''. 
In our experiment the Q-Mixing policy is the teacher to a student neural network that is the size of a single BR policy.
The student is trained via supervised learning, reusing the pure-strategy BRs' replay buffers as a dataset.
A batch of data is sampled from the replay-buffer and the student predicts $Q^S$ the teacher's response $Q^T$. 
The student is trained to imitate the softmax policy of the teacher.
The full policy distillation loss is
$$
\mathcal{L}_{\text{Distill}}=\sum_{i}^{|D|} \text{softmax}(\frac{Q^T}{\tau})\ln{\frac{\text{softmax}(\frac{Q^T}{\tau})}{\text{softmax}(\frac{Q^S}{\tau})}},
$$
where $D$ is the dataset of concatenated replay buffers.

The hyperparameters used in policy distillation are listed in Table~\ref{tab:hparams:distill}. 
The student policy is the same neural network that's used in computing the best-responses to individual policies; it is described in Section~\ref{sec:details:attackgraph}.
We did not perform a hyperparameter search on this network or learning algorithm.

\begin{table}[ht]
  \caption{Policy distillation hyperparameters.}
  \centering
  \label{tab:hparams:distill}
  \begin{tabular}{lr}
    \toprule
    Hyperparameter & Value\\
    \midrule
    Optimizer &Adam \\
    Learning Rate &0.003\\
    Batch Size &64 \\
    \bottomrule
  \end{tabular}
\end{table}

\end{document}